\newtheorem{theorem}{Theorem}
\newtheorem{corollary}{Corollary}
\newtheorem{lemma}{Lemma}
\newtheorem{proposition}{Proposition}
\theoremstyle{definition}
\newtheorem{definition}{Definition}
\newcommand{\lcp}{\mathit{lcp}}
\newcommand{\rev}[1]{{#1}^{\mathit{R}}}
\newcommand{\LPal}{\mathsf{LPal}}
\newcommand{\MPal}{\mathsf{MPal}}
\newcommand{\CT}{\mathsf{CT}}
\newcommand{\PD}{\mathsf{PD}}
\begin{document}
\title{Computing maximal palindromes in non-standard matching models}

\author[1]{Takuya~Mieno}
\author[2]{Mitsuru~Funakoshi\thanks{Current affiliation: NTT Communication Science Laboratories, Japan.}}
\author[2]{Yuto~Nakashima}
\author[2]{Shunsuke Inenaga}
\author[3]{Hideo~Bannai}
\author[2]{Masayuki~Takeda}

\affil[1]{The University of Electro-Communications, Japan}
\affil[2]{Kyushu University, Japan}
\affil[3]{Institute of Science Tokyo, Japan}

\date{}
\maketitle

\begin{abstract}
  \emph{Palindromes} are popular and important objects in textual data processing,
  bioinformatics, and combinatorics on words.
  Let $S = XaY$ be a string
  where $X$ and $Y$ are of the same length, and $a$ is either a single character or the empty string.
  Then, there exist two alternative definitions for palindromes:
  $S$ is said to be a palindrome if $S$ is equal to its reversal $\rev{S}$ (Reversal-based definition);
  or if its right-arm $Y$ is equal to the reversal of its left-arm $\rev{X}$ (Symmetry-based definition).
  It is clear that if the ``equality''~($\approx$) used in both definitions
  is exact character matching~($=$),
  then the two definitions are the same.
  However, if we apply other string-equality criteria $\approx$,
  including the \emph{complementary-matching model} for biological sequences,
  the \emph{Cartesian-tree model} [Park et al., TCS 2020],
  the \emph{parameterized model} [Baker, JCSS 1996],
  the \emph{order-preserving model} [Kim et al., TCS 2014],
  and the \emph{palindromic-structure model} [I et al., TCS 2013],
  then are the reversal-based palindromes and the symmetry-based palindromes the same?
  To the best of our knowledge,
  no previous work has considered or answered this natural question.
  In this paper, we first provide answers to this question,
  and then present efficient algorithms for computing all
  \emph{maximal palindromes under the non-standard matching models} in a given string.
  After confirming that Gusfield's offline suffix-tree-based algorithm for computing
  maximal symmetry-based palindromes can be readily extended to the aforementioned matching models,
  we show how to extend Manacher's online algorithm for computing maximal reversal-based palindromes
  in linear time for all the aforementioned matching models.
\end{abstract}

\section{Introduction}\label{sec:introduction}
Finding characteristic patterns in strings,
such as {tandem repeat}, {unique factor}, and {palindrome},
is a fundamental task in string data processing.
Of course, what kind of characteristic string we want depends on the application domain.
For example, when strings represent DNA/RNA sequences,
tandem repeats can help find genetic diseases~\cite{gusfield97:_algor_strin_trees_sequen},
unique factors may be helpful in preprocessing PCR primer design~\cite{Pei}, and
(gapped) palindromes may represent secondary structures of RNA called hairpin~\cite{gusfield97:_algor_strin_trees_sequen}.
Let us consider the case
when a string represents a stock chart, which is a numerical sequence.
In this case,
one is less interested in their exact values (i.e., exact prices) than
in finding price fluctuations from the stock chart.
Motivated by discovering such fluctuations, Park et al.~\cite{ParkBALP20} introduced a notion of \emph{Cartesian-tree match}.
The \emph{Cartesian-tree} of a numeric string $S$ of length $n$ is an ordered binary tree recursively defined below:
the root is the leftmost occurrence $i$ of the smallest value in $S$,
the left child of the root is the Cartesian-tree of $S[1..i-1]$, and
the right child of the root is the Cartesian-tree of $S[i+1.. n]$.
We say that two strings of equal length Cartesian-tree match if their Cartesian-trees are isomorphic.
Cartesian-tree matching can capture shapes of stock chart patterns such as \emph{head-and-shoulder}~\cite{ParkBALP20}.
Also, some significant stock chart patterns exhibit \emph{palindrome-like} structures (e.g., head-and-shoulder and double-top).
Hence, enumerating such palindrome-like structures in a stock chart
is expected to improve the efficiency of chart pattern searches.
The first motivation of our research is
to explore efficient methods to compute such palindrome-like structures by applying Cartesian-tree matching.

First, let us revisit the definition of palindromes in the context of the standard matching model,
that is, the ``equality''~($\approx$) between two strings
is the exact matching on a character-by-character basis~($=$).
There are two common definitions for palindromes:
Let $S = XaY$ be a string,
where $X$ and $Y$ are the same length, and $a$ is either a single character or the empty string.
Then, $S$ is said to be a palindrome if:
\begin{description}
  \item[Reverse-based definition:] $S$ is equal to its reversal $\rev{S}$;
  \item[Symmetry-based definition:] $Y$ is equal to the reversal of $X$.
\end{description}
It is clear that if we assume the standard matching model,
then the two definitions are the same,
meaning that a string $S$ is a reverse palindrome
iff $S$ is a symmetric palindrome.
Observe this with examples such as $\mathtt{racecar}$ and $\mathtt{noon}$.
However, the two definitions above differ
if we apply the Cartesian-tree model.
For instance,
the Cartesian-tree of $S_1 = \mathtt{223}$
is the tree with no branches that slopes down to the right
while the Cartesian-tree of $S_1^R = \mathtt{322}$ is shaped like a \emph{logical AND symbol} ($\land$)
because the leftmost smallest element in $S_1$ is the second one.
Thus, $S_1$ is not a reverse palindrome.
On the other hand,
$S_1 = \mathtt{223}$ is a symmetric palindrome
since the Cartesian-trees of its right-half $\mathtt{3}$ and the reversal of its left-half $(\mathtt{2})^R = \mathtt{2}$ are both singletons.
In addition,
the two definitions are also different
if we apply the \emph{complementary-matching model}
where \texttt{A} matches \texttt{T}, and \texttt{G} matches \texttt{C},
which is a standard matching model for DNA sequences.
For instance, a string $S_2 = \mathtt{AGTCT}$
is not a reverse palindrome since $S_2[3] = \mathtt{T}$ and $S_2^R[3] = \mathtt{T}$ do not complementary-match,
but is a symmetric palindrome
since $(\mathtt{AG})^R = \mathtt{GA}$ and $\mathtt{CT}$ complementary-match.
Now we raise the following question:
If we apply other string-equality criteria $\approx$,
including
the \emph{parameterized model}~\cite{Baker96},
the \emph{order-preserving model}~\cite{KimEFHIPPT14},
and the \emph{palindromic-structure model}~\cite{IIT13},
then are the reverse palindromes and the symmetric palindromes the same?
This question is interesting because,
while the symmetry-based palindrome definition requires only $\rev{X} \approx Y$,
the reversal-based palindrome definition requires more strict matching
under $\approx$ so that
$S \approx \rev{S} \Leftrightarrow XaY \approx \rev{(XaY)} \Leftrightarrow XaY \approx \rev{Y} a \rev{X}$.
Thus, these two types of palindromes can be quite different for some matching criterion $\approx$ (see also Figure~\ref{fig:scsr-palindromes}).
\begin{figure}[tbh]
  \centerline{
    \includegraphics[width=1.0\textwidth]{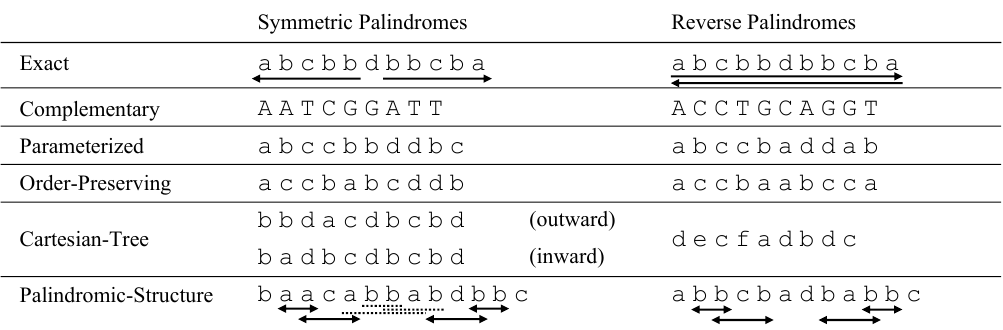}
  }
  \caption{
    This figure shows examples of a palindrome in each non-standard matching model.
    A bijection $f$ such that $f(\mathtt{a})=\mathtt{c}$, $f(\mathtt{b})=\mathtt{b}$, $f(\mathtt{c})=\mathtt{d}$, $f(\mathtt{d})=\mathtt{a}$
    gives the parameterized sym-palindrome.
    A bijection $g$ such that $g(\mathtt{a})=\mathtt{b}$, $g(\mathtt{b})=\mathtt{a}$, $g(\mathtt{c})=\mathtt{d}$, $g(\mathtt{d})=\mathtt{c}$
    gives the parameterized rev-palindrome.
    In the palindromic-structure sym-palindrome,
    though palindromes $\mathtt{bb}$, $\mathtt{abba}$, and $\mathtt{bab}$ exist,
    these palindromes are ignored in this symmetric condition.
  }
  \label{fig:scsr-palindromes}
\end{figure}
To our knowledge,
no previous work has considered or answered this natural question.
In this paper, we first provide quick answers to this question
and present efficient algorithms for computing such palindromes in a given string under the non-standard matching models.

One of the well-studied topics regarding palindromes is \emph{maximal palindromes},
which are substring palindromes whose left-right extensions are not palindromes.
It is interesting and important to find all maximal palindromes
in a given string $T$ because any substring palindrome of $T$ can be obtained by
removing an equal number of characters from the left and the right of some maximal palindrome.
Hence, by computing all maximal palindromes of a string, we obtain a compact representation of all palindromes in the string.
Manacher~\cite{Manacher75} showed an elegant algorithm for finding all maximal palindromes in a given string $T$.
Manacher's algorithm works in an online manner (processes the characters of $T$ from left to right) and runs in $O(n)$ time and space for general (unordered) alphabets,
where $n$ is the length of the input string $T$.
Later, Gusfield~\cite{gusfield97:_algor_strin_trees_sequen} showed
another famous algorithm for computing all maximal palindromes.
Gusfield's algorithm uses
the suffix tree~\cite{Weiner73} built on a concatenation of $T$ and its reversal $\rev{T}$ that is enhanced with an LCA (lowest common ancestor) data structure~\cite{SchieberV88}.
Gusfield's method is offline (processes all the characters of $T$ together)
and works in $O(n)$ time and space for linearly-sortable alphabets,
including integer alphabets
of size $\mathsf{poly}(n)$.
We remark that Gusfield's algorithm uses only the symmetry-based definition
for computing maximal palindromes.
That is, computing the longest common prefix of $T[c..n]$ and $\rev{(T[1..c-1])}$
for each integer position $c$ in $T$ gives us the maximal palindrome
of even length centered at $c-0.5$.
Maximal palindromes of odd lengths can be computed analogously.
On the other hand, Manacher's algorithm
(which will be briefly recalled in a subsequent section)
uses both reversal-based and symmetry-based definitions for computing maximal palindromes.

\begin{table}[htb]
  \begin{center}
    \begin{tabular}{|c||c|c|} \hline
      \diagbox{Matching}{Type}               & Symmetric Palindromes                                               & Reverse Palindromes           \\ \hline \hline
      \multirow{2}{*}{Exact}                 & $O(n)$ time~\cite{gusfield97:_algor_strin_trees_sequen}             & $O(n)$ time~\cite{Manacher75} \\
                                             & (linearly sortable)                                                 & (general unordered)           \\ \hline
      \multirow{2}{*}{Complementary}         & $O(n)$ time                                                         & $O(n)$ time                   \\
                                             & (linearly sortable)                                                 & (general unordered)           \\  \hline
      \multirow{2}{*}{Cartesian-Tree}        & $O(n \log n)$ time                                                  & $O(n)$ time                   \\
                                             & (general ordered)                                                   & (linearly sortable)           \\\hline
      \multirow{2}{*}{Parameterized}         & $O(n \log (\sigma+\pi))$ time                                       & $O(n)$ time                   \\
                                             & (linearly sortable)                                                 & (linearly sortable)           \\ \hline
      \multirow{2}{*}{Order-Preserving}      & $O(n \log \log^2 n / \log \log \log n)$ time                        & $O(n)$ time                   \\
                                             & (linearly sortable)                                                 & (general ordered)           \\  \hline
      \multirow{2}{*}{Palindromic-Structure} & $O(n \min\{ \sqrt{\log n}, \log \sigma / \log \log \sigma \})$ time & $O(n)$ time                   \\
                                             & (general unordered)                                                 & (general unordered)           \\\hline
    \end{tabular}
  \end{center}
  \caption{Time complexities of algorithms for computing each type of palindromes, where $n$ denotes the length of the input string, $\sigma$ is the (static) alphabet size, and $\pi$ is the size of the parameterized alphabet for parameterized matching. The time complexities are valid under some assumptions for the alphabet, which are designated in parentheses. Each of the algorithms uses $O(n)$ space.}
  \label{tab:complexity}
\end{table}

In this paper, we propose a new framework for formalizing palindromes under the non-standard matching models, which we call \emph{Substring Consistent Symmetric and Two-Transitive Relations} (\emph{SCSTTRs}) that include the complementary-matching model and \emph{Substring Consistent Equivalence Relations} (\emph{SCERs})~\cite{MatsuokaAIBT16}.
We note that SCERs include
the Cartesian-tree model~\cite{ParkBALP20},
the parameterized model~\cite{Baker96},
the order-preserving model~\cite{KimEFHIPPT14},
and the palindromic-structure model~\cite{IIT13}.
As far as we are aware, the existing algorithms are designed only for computing standard maximal palindromes based on exact character matching, and it is not clear how they can be adapted for the aforementioned palindromes under the non-standard matching models.
Our first claim is that Gusfield's framework is easily extensible for palindromes under the non-standard matching models: if one has a suffix-tree-like data structure that is built on a suitable encoding for a given matching criterion $\approx$ enhanced with the LCA data structure, then one can readily compute all maximal symmetric palindromes under the non-standard matching models in $O(n)$ time
(the details are shown in Section~\ref{sec:algo_symmetric}).
Thus, the construction time for the suffix-tree-like data structure dominates the total time complexity (see Table~\ref{tab:complexity}).
On the other hand, extending Manacher's algorithm for palindromes under the non-standard matching models is much more involved.
The main contribution of this paper is to design Manacher-style algorithms that compute all maximal reverse palindromes in $O(n)$ time
under \emph{all} the non-standard matching models considered in this paper (see Table~\ref{tab:complexity}).
We remark that a straightforward implementation of the extension of Manacher's algorithm requires quadratic time.
We then reduce the time complexity to (near) linear by exploiting and utilizing combinatorial properties of the non-standard matching models (in particular, the Cartesian-tree model, the parameterized model, and the palindromic-structure model require non-trivial elaborations).

This paper is a full version of a preliminary version~\cite{FunakoshiMNIBT24}.
This paper includes complete proofs of theorems and possible future work, which were omitted in the preliminary version.

\paragraph*{Related work}
The Cartesian-tree matching problem is introduced by Park et al.~\cite{ParkBALP20}.
They showed a linear-time algorithm to solve the problem
and also proposed an $O(n)$-space index for the problem,
which can be constructed in $O(n\log n)$ time.
Kim and Cho~\cite{KimC21} developed a compact $3n+o(n)$-\emph{bits} index.
Nishimoto et al.,~\cite{NishimotoFNI21} proposed another $O(n)$-space index,
which can be constructed in $O(n\log\sigma)$ time where $\sigma$ is the alphabet size.
A probabilistic method for multiple pattern Cartesian-tree matching is also studied~\cite{SongGRFLP21}.
Many other variants of the Cartesian-tree matching problem are studied,
including
matching on indeterminate strings~\cite{GawrychowskiGL20},
subsequence matching~\cite{OizumiKMIA22},
approximate matching~\cite{AuvrayDGL23}, and
the longest common factor computing~\cite{FaroLPS23}.

As for other or more general settings,
Kikuchi et al.~\cite{KikuchiHYS20} considered the problem of finding \emph{covers} under SCERs.
Kociumaka et al.~\cite{KociumakaRRW16} introduced \emph{squares} (tandem repeats) in the non-standard matching models, including order-preserving matching.
Gawrychowski et al.~\cite{OrderPreservingSquares} presented a worst-case optimal time algorithm to compute the distinct order-preserving matching squares in a string.
We note that our results are not readily obtained by the results above and require new techniques due to the different natures between covers/squares and palindromes.

\section{Preliminaries}\label{sec:preliminaries}
\subsection{Notations}
Let $\Sigma$ be an \emph{alphabet}.

Alphabet $\Sigma$ is said to be ordered
if there is a total order, denoted as $\preccurlyeq$, on $\Sigma$.
Otherwise, $\Sigma$ is said to be unordered.
For any two elements $a$ and $b$ of an ordered alphabet $\Sigma$,
we write $a \prec b$ if $a \preccurlyeq b$ and $a \ne b$.

An element of $\Sigma^*$ is called a \emph{string}.
The length of a string $T$ is denoted by $|T|$.
The empty string $\varepsilon$ is the string of length $0$,
namely, $|\varepsilon| = 0$.
For a string $T = xyz$, $x$, $y$ and $z$ are called
a \emph{prefix}, \emph{substring}, and \emph{suffix} of $T$, respectively.
For a string $T$ and an integer $1 \leq i \leq |T|$,
$T[i]$ denotes the $i$-th character of $T$,
and for two integers $1 \leq i \leq j \leq |T|$,
$T[i..j]$ denotes the substring of $T$
that begins at position $i$ and ends at position $j$.
For convenience, let $T[i..j] = \varepsilon$ when $i > j$.
The reversal of a string $T$ is denoted by $\rev{T}$,
i.e., $\rev{T} = T[|T|] \cdots T[1]$.
For two strings $X$ and $Y$,
$\lcp(X, Y)$ denotes the length of the longest common prefix of $X$ and $Y$.
Namely, $\lcp(X, Y) = \max\{\ell \mid X[1.. \ell] = Y[1.. \ell]\}$.
A \emph{rightward longest common extension} (\emph{rightward LCE})
query on a string $T$
is to compute $\lcp(T[i..|T|], T[j..|T|])$
for given two positions $1 \leq i < j \leq |T|$.
Similarly, a \emph{leftward LCE} query is
to compute $\lcp(\rev{T[1..i]}, \rev{T[1..j]})$.
Then, an \emph{outward LCE} (resp. \emph{inward LCE}) query is, given two positions $1 \leq i < j \leq |T|$,
to compute $\lcp(\rev{T[1..i]}, T[j..|T|])$ (resp. $\lcp(T[i..|T|], \rev{T[1..j]})$).

A string $T$ is called a \emph{palindrome} if $T = \rev{T}$.
We remark that the empty string $\varepsilon$ is also
considered to be a palindrome.
A non-empty substring palindrome $T[i..j]$
is said to be a \emph{maximal palindrome} in $T$
if $T[i-1] \neq T[j+1]$, $i = 1$, or $j = |T|$.
For any non-empty substring palindrome $T[i..j]$ in $T$,
$\frac{i+j}{2}$ is called its \emph{center}.
It is clear that for each center $c = 1, 1.5, \ldots, n-0.5, n$,
we can identify the maximal palindrome $T[i..j]$ whose center is $c$
(namely, $c = \frac{i+j}{2}$).
Thus, there are exactly $2n-1$ maximal palindromes in a string of length $n$
(including empty strings, which occur at non-integer centers $c$ when $T[c-0.5] \neq T[c+0.5]$).
Manacher~\cite{Manacher75} showed an online algorithm
that computes all maximal palindromes in a string $T$ of length $n$ in $O(n)$ time.
An alternative offline approach is to use outward LCE queries
for $2n-1$ pairs of positions in $T$.
Using the suffix tree~\cite{Weiner73} for string $T\$\rev{T}\#$
enhanced with a lowest common ancestor data structure~\cite{HarelT84,SchieberV88,BenderF00},
each outward LCE query can be answered in $O(1)$ time
where $\$$ and $\#$ are special characters that do not appear in $T$.
Preprocessing for this approach takes $O(n)$ time and space~\cite{Farach-ColtonFM00,gusfield97:_algor_strin_trees_sequen}
when alphabet $\Sigma$ is linearly-sortable.

\subsection{Substring Consistent Symmetric and Two-Transitive Relation}
A \emph{Substring Consistent Equivalence Relation} (\emph{SCER}) is an equivalence relation $\approx_\mathrm{SCER}$ such that $X \approx_\mathrm{SCER} Y$ for two strings $X, Y$ of equal length means that $X[i..j] \approx_\mathrm{SCER} Y[i..j]$ holds for all $1 \leq i \leq j \leq |X| = |Y|$.
Matsuoka et al.~\cite{MatsuokaAIBT16} defined the notion and considered the pattern matching problems with SCER.
Afterward, several problems over SCER have been considered (e.g.,~\cite{Hendrian20,KikuchiHYS20,JargalsaikhanHY22}).

In this paper, we treat a more general relation; Substring Consistent Symmetric and Two-Transitive Relation (SCSTTR).
An SCSTTR is a relation $\approx_\mathrm{SCSTTR}$ that satisfies the following conditions for strings
$W, X, Y, Z$ of equal length:
\begin{enumerate}
  \item Satisfies the symmetric law; namely, if $X \approx_\mathrm{SCSTTR} Y$, then $Y \approx_\mathrm{SCSTTR} X$ also holds.
  \item Satisfies the two-transitive law; namely, if $W \approx_\mathrm{SCSTTR} X$, $X \approx_\mathrm{SCSTTR} Y$, and $Y \approx_\mathrm{SCSTTR} Z$, then $W \approx_\mathrm{SCSTTR} Z$ also holds.
  \item Satisfies the substring consistency; namely, if $X \approx_\mathrm{SCSTTR} Y$, then
    $X[i..j] \approx_\mathrm{SCSTTR} Y[i..j]$ holds for all $1 \leq i \leq j \leq |X| = |Y|$.
\end{enumerate}

We remark that any SCER is also an SCSTTR.
Our aim for introducing SCSTTRs is to deal with the complementary-matching model including the Watson-Crick (WK) model~\cite{WatsonC1953} for $\Sigma = \{\mathsf{A}, \mathsf{C}, \mathsf{G}, \mathsf{T}\}$ that is \emph{not} an SCER.
For each of the SCSTTR matching models below, encodings with which the matching can be reduced to exact matching are known.

\paragraph*{Complementary Matching~\cite{LucaL06,KariM07}}
Let $f$ be a function on $\Sigma$ that has the two following properties: (1) $f(uv) = f(u)f(v)$ for all strings $u, v \in \Sigma^*$, and (2) $f^2$ equals the identity.
Two strings $X$ and $Y$ are said to complementary-match if there is a function $f$ satisfying $X = f(Y)$.
For example, if $f(\mathsf{A}) = \mathsf{T}$, $f(\mathsf{C}) = \mathsf{G}$, $f(\mathsf{T}) = \mathsf{A}$, and $f(\mathsf{G}) = \mathsf{C}$ on $\Sigma=\{\mathsf{A, T, G, C}\}$, then $X=\mathsf{AGCTAT}$ and $Y=\mathsf{TCGATA}$ complementary-match.
Reversal-based palindromes under complementary-matching have been studied as $\theta$-palindromes~\cite{LucaL06,KariM07}
\footnote{$\theta$-palindromes are defined by the function $\theta$ that is not morphic but is antimorphic in the literature, while we use the morphic function $f$ since they can be treated as the same in our arguments.}.
In this paper, we consider both the reverse and symmetric definitions and call them Theta rev-palindromes and Theta sym-palindromes, respectively.

\paragraph*{Cartesian-tree Matching~\cite{ParkBALP20}}
The Cartesian-tree $\CT(T)$ of a string $T$ is the ordered binary tree recursively defined as follows~\cite{Vuillemin80}:
\begin{enumerate}
  \item If $T$ is the empty string, $\CT(T)$ is empty.
  \item If $T$ is not empty, let $T[i]$ be the leftmost occurrence of the smallest character in $T$.
    Then, the root of $\CT(T)$ is $T[i]$, the left-side subtree of $\CT(T)$ is $\CT(T[1..i-1])$, and the right-side subtree of $\CT(T)$ is $\CT(T[i+1..|T|])$.
\end{enumerate}
We say that two strings $X$ and $Y$ Cartesian-tree match and denote it by $X \approx_{\mathrm{ct}} Y$
if $\CT(X)$ and $\CT(Y)$ are isomorphic as unlabeled ordered trees.
For example, $X \approx_{\mathrm{ct}} Y$ holds for two strings $X = \mathsf{cabdcf}$ and $Y = \mathsf{eaacbc}$ (see Figure~\ref{fig:CTmatch}).
\begin{figure}[t!]
  \centerline{
    \includegraphics[width=0.6\linewidth]{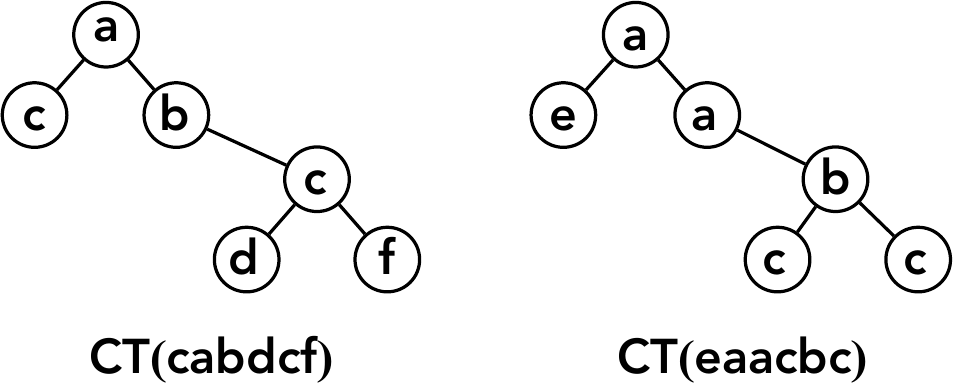}
  }
  \caption{
    Illustration for Cartesian-trees of $X = \mathsf{cabdcf}$ and $Y = \mathsf{eaacbc}$.
    Since they are isomorphic except their node labels, $X \approx_{\mathrm{ct}} Y$ holds.
  }
  \label{fig:CTmatch}
\end{figure}

The \emph{parent distance encoding} $\PD_T$ of a string $T$ is an integer sequence of length $|T|$ such that:
$$ \PD_T[i] =
\begin{cases}
  i - \max_{1 \leq j < i} \{ j : T[j] \preccurlyeq T[i]\} & \text{if such $j$ exists}; \\
  0                                                       & \text{otherwise}.
\end{cases}
$$
For any two strings $X$ and $Y$, $X \approx_{\mathrm{ct}} Y$ iff $\PD_X = \PD_Y$~\cite{ParkBALP20}.
For example, again consider two strings $X = \mathsf{cabdcf}$ and $Y = \mathsf{eaacbc}$, which satisfy $X \approx_{\mathrm{ct}} Y$,
then $\PD_{X= \mathsf{cabdcf}}=001121 = \PD_{Y= \mathsf{eaacbc}}$.
We note that if $S$ is a prefix of $T$, then the number of $0$s in $\PD_T$ is at least the number of $0$s in $\PD_S$.

\paragraph*{Parameterized Matching~\cite{Baker93}}
Let $\Sigma$ and $\Pi$ be disjoint sets of characters, respectively, called a static and parameterized alphabet.
Two strings $X$ and $Y$ are said to parameterized match if there is a renaming bijection over the alphabet $\Pi$ that transforms $X$ into $Y$.
We write $X \approx_{\mathrm{para}} Y$ iff strings $X$ and $Y$ parameterized match.

The \emph{previous encoding} $\mathsf{PE}_T$ of a string $T$ is a sequence of length $|T|$ such that:
$$ \mathsf{PE}_T[i] =
\begin{cases}
  T[i]                                         & \text{if $T[i] \in \Sigma$};                  \\
  i - \max_{1 \leq j < i} \{ j : T[j] = T[i]\} & \text{if $T[i] \in \Pi$ and such $j$ exists}; \\
  0                                            & \text{otherwise}.
\end{cases} $$

For any two strings $X$ and $Y$, $X \approx_{\mathrm{para}} Y$ iff $\mathsf{PE}_X = \mathsf{PE}_Y$~\cite{Baker93}.
For example, again consider two strings $X=\mathsf{aabaCbC}$ and $Y=\mathsf{ddadCaC}$, which satisfy $X \approx_{\mathrm{para}} Y$,
then $\mathsf{PE}_{X=\mathsf{aabaCbC}} = 0102\mathsf{C}3\mathsf{C} = \mathsf{PE}_{Y=\mathsf{ddadCaC}}$.
For simplicity, we assume in the rest of this paper that the string consists only of characters from $\Pi$.
Only trivial modifications to our algorithms are required when considering characters from $\Sigma$ as well.

\paragraph*{Order-preserving Matching~\cite{KimEFHIPPT14}}
For ordered alphabets,
two strings $X$ and $Y$ are said to order-preserving match if the relative orders of $X$ correspond to those of $Y$.
Namely, $X[i] \preccurlyeq X[j] \Leftrightarrow Y[i]\preccurlyeq Y[j]$ for all $1 \leq i, j \leq |X| = |Y|$.
We write $X \approx_{\mathrm{op}} Y$ iff strings $X$ and $Y$ order-preserving match.

Let $\alpha_T[i]$ (resp. $\beta_T[i]$) be the rightmost occurrence of the predecessor (resp. the successor) of $T[i]$ in $T[1..i-1]$.
Namely,
$ \alpha_T[i] = \max(\{ j < i : T[j]$ is the largest element satisfying $T[j] \preccurlyeq T[i]\}\cup\{0\})$
and
$\beta_T[i] = \max(\{ j < i: T[j]$ is the smallest element satisfying $T[j] \succcurlyeq T[i] \}\cup\{0\})$.
Let $\mathsf{Code}_T$ be a pairwise sequence such that $\mathsf{Code}_T = (\alpha_T[1], \beta_T[1]), \ldots, (\alpha_T[|T|], \beta_T[|T|])$ holds.
For any two strings $X$ and $Y$, $X \approx_{\mathrm{op}} Y$ iff $\mathsf{Code}_X = \mathsf{Code}_Y$ holds~\cite{CrochemoreIKKLP16}.
For example, again consider two strings $X = \mathsf{cecag}$ and $Y = \mathsf{hohbr}$, which satisfy $X \approx_{\mathrm{op}} Y$,
then $\mathsf{Code}_{X=\mathsf{cecag}}=(0,0)(1,0)(1,1)(0,3)(2,0) = \mathsf{Code}_{Y=\mathsf{hohbr}}$.

\paragraph*{Palindromic-structure Matching~\cite{IIT13}}
Two strings $X$ and $Y$ are said to palindromic-structure match if the length of the maximal palindrome at each center position in $X$ is equal to that of $Y$.
We write $X \approx_{\mathrm{pal}} Y$ iff strings $X$ and $Y$ palindromic-structure match.

Let $\LPal_T$ be the integer sequence of length $|T|$ such that $\LPal_T[i]$ stores the length of the longest suffix palindrome of the position $i$ in $T$.
For any two strings $X$ and $Y$, $X \approx_{\mathrm{pal}} Y$ iff $\LPal_X$ is equal to $\LPal_Y$~\cite{IIT13}.
For example, again consider two string $X = \mathsf{aabacdca}$ and $Y = \mathsf{ccacdadc}$, which satisfy $X \approx_{\mathrm{pal}} Y$,
then $\LPal_{\mathsf{aabacdca}}=12131135= \LPal_{\mathsf{ccacdadc}}$.

\subsection{Our Problems}
In this paper, we consider the following two definitions of palindromes:

\begin{definition}[SCSTTR symmetry-based palindromes]
  A string $P$ is called an SCSTTR sym-palindrome
  if $\rev{(P[1..\lfloor{|P|/2}\rfloor])} \approx_{\mathrm{SCSTTR}} P[\lceil{|P|/2}\rceil..|P|]$ holds.
\end{definition}

\begin{definition}[SCSTTR reversal-based palindromes]
A string $P$ is called an SCSTTR rev-{\linebreak}palindrome
  if $P \approx_{\mathrm{SCSTTR}} \rev{P}$ holds.
\end{definition}

As for palindromes under the exact matching, the above two definitions are the same.
However, as for palindromes under the non-standard matching models, these definitions are not equivalent.
For example, $\mathsf{ATTGAAT}$ is not a WK rev-palindrome but is a WK sym-palindrome.
Also, $\mathsf{CACB}$ is not a parameterized rev-palindrome but is a parameterized sym-palindrome.
We give examples of SCSTTR sym-/rev-palindromes in Figure~\ref{fig:scsr-palindromes}.
This paper considers problems for computing maximal SCSTTR (complementary, parameterized, order-preserving, Cartesian-tree, and palindromic-structure) palindromes for the two definitions.
Firstly, we notice that SCSTTR rev-palindromes have symmetricity:
\begin{lemma}\label{lem:symmetricity_of_SCSTTR_revpal}
  Let $P$ be an SCSTTR rev-palindrome.
  If a substring $P[i.. j]$ of $P$ is an SCSTTR rev-palindrome,
  then the substring at the symmetrical position is also an SCSTTR rev-palindrome,
  namely, $P[|P|-j+1.. |P|-i+1]$ is an SCSTTR rev-palindrome.
\end{lemma}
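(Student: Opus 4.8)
The plan is to exploit the defining relation $P \approx_\mathrm{SCSTTR} \rev{P}$ together with substring consistency to produce cross-relations between the target substring $P[|P|-j+1..|P|-i+1]$ and the given rev-palindrome $P[i..j]$, and then to close the argument with a single application of the two-transitive law. Throughout, write $n = |P|$, $Q = P[i..j]$, and $Q' = P[n-j+1..n-i+1]$; note that $Q$ and $Q'$ have the same length $j-i+1$, so all relations below are between equal-length strings, as required. I abbreviate $\approx_\mathrm{SCSTTR}$ by $\approx$. The goal is to establish $Q' \approx \rev{Q'}$.

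First I would record the elementary identities relating windows of $\rev{P}$ to reversed substrings of $P$. Since $\rev{P}[k] = P[n-k+1]$, reading $\rev{P}$ over the window $[i,j]$ gives $\rev{P}[i..j] = P[n-i+1]P[n-i]\cdots P[n-j+1] = \rev{Q'}$, while reading it over the symmetric window $[n-j+1,\, n-i+1]$ gives $\rev{P}[n-j+1..n-i+1] = P[j]P[j-1]\cdots P[i] = \rev{Q}$. These two identities are the crux of the bookkeeping; everything else is routine.

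Next, because $P$ is a rev-palindrome we have $P \approx \rev{P}$, and substring consistency lets me restrict this relation to any common window. Restricting to $[i,j]$ yields $Q = P[i..j] \approx \rev{P}[i..j] = \rev{Q'}$, and restricting to $[n-j+1,\, n-i+1]$ yields $Q' = P[n-j+1..n-i+1] \approx \rev{P}[n-j+1..n-i+1] = \rev{Q}$. Combined with the hypothesis that $Q$ is itself a rev-palindrome, i.e. $Q \approx \rev{Q}$, and the symmetric law, I then have the three relations $Q' \approx \rev{Q}$, $\rev{Q} \approx Q$, and $Q \approx \rev{Q'}$.

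Finally I would chain these three relations. Setting $(W,X,Y,Z) = (Q',\, \rev{Q},\, Q,\, \rev{Q'})$, the three relations read exactly $W \approx X$, $X \approx Y$, $Y \approx Z$, so the two-transitive law gives $W \approx Z$, that is $Q' \approx \rev{Q'}$, which is precisely the claim. I expect the main (and essentially only) obstacle to be conceptual rather than technical: one must resist assuming that $X \approx Y$ implies $\rev{X} \approx \rev{Y}$, since this is not among the SCSTTR axioms and can genuinely fail (for instance in the complementary model), and instead notice that the symmetric window furnishes the ``missing'' relation $Q' \approx \rev{Q}$ directly. It is worth stressing that the three-term chain is exactly what the two-transitive law supplies, and that plain transitivity is unavailable here because an SCSTTR need be neither reflexive nor transitive.
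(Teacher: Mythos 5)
Your proof is correct and follows essentially the same route as the paper's: both derive $Q \approx \rev{Q'}$ and $Q' \approx \rev{Q}$ from $P \approx \rev{P}$ (via substring consistency), combine these with the hypothesis $Q \approx \rev{Q}$, and close with the symmetric and two-transitive laws. Your write-up is simply more explicit about the window identities and the exact instantiation $(W,X,Y,Z)=(Q',\rev{Q},Q,\rev{Q'})$, which the paper leaves implicit.
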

\begin{proof}
  Since $P$ is an SCSTTR rev-palindrome,
  $P[i.. j] \approx_{\mathrm{SCSTTR}} P[|P|-j+1.. |P|-i+1]^R$ and
  $P[i.. j]^R \approx_{\mathrm{SCSTTR}} P[|P|-j+1.. |P|-i+1]$ hold.
  Also, since $P[i.. j]$ is an SCSTTR rev-palindrome,
  $P[i.. j] \approx_{\mathrm{SCSTTR}} P[i.. j]^R$ holds.
  Combining these three equations under SCSTTR, we obtain
  $P[|P|-j+1.. |P|-i+1] \approx_{\mathrm{SCSTTR}} P[|P|-j+1.. |P|-i+1]^R$ from the symmetric law and two-transitive law.
  Therefore, $P[|P|-j+1.. |P|-i+1]$ is an SCSTTR rev-palindrome.
\end{proof}
This lemma allows us to design a Manacher-like algorithm for computing maximal SCSTTR rev-palindromes in Section~\ref{sec:algo_reverse}.

Next, we notice that the Cartesian-tree matching is not closed under reversal;
namely, $X\approx_{\mathsf{ct}} Y$ does not imply $\rev{X} \approx_{\mathsf{ct}} \rev{Y}$.
For instance, $\mathtt{babaab} \approx_{\mathsf{ct}} \mathtt{bababb}$ but $\mathtt{baabab} \not \approx_{\mathsf{ct}} \mathtt{bbabab}$~(see also Figure~\ref{fig:reversalCT}).
\begin{figure}[tb]
  \centerline{\includegraphics[width=0.6\textwidth]{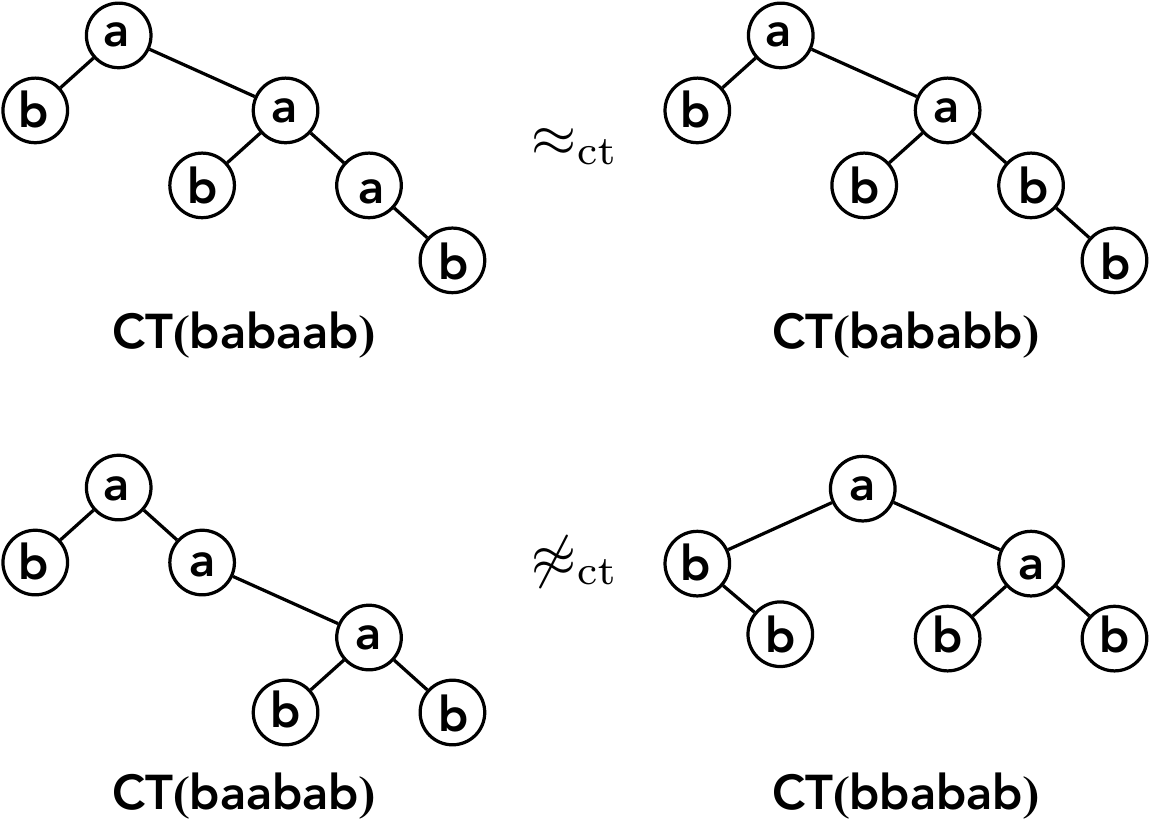}}
  \caption{Illustration for Cartesian-trees of strings $\mathtt{babaab}$, $\mathtt{bababb}$, $(\mathtt{babaab})^R$, and $(\mathtt{bababb})^R$.}
  \label{fig:reversalCT}
\end{figure}
Hence, there are cases where the Cartesian-tree sym-palindrome obtained by extending outward direction is not equal to that obtained by extending inward direction for the same center position $c$.
Therefore, we introduce the two following variants for the Cartesian-tree sym-palindromes:
\begin{definition}[outward Cartesian-tree sym-palindromes]
  A string $P$ is an outward Cartesian-tree sym-palindrome if $\rev{(P[1..\lfloor{|P|/2}\rfloor])} \approx_{\mathrm{ct}} P[\lceil{|P|/2}\rceil..|P|]$.
\end{definition}

\begin{definition}[inward Cartesian-tree sym-palindromes]
  A string $P$ is an inward Cartesian-tree sym-palindrome if
  $P[1..\lfloor{|P|/2}\rfloor] \approx_{\mathrm{ct}} \rev{(P[\lceil{|P|/2}\rceil..|P|])}$.
\end{definition}

\section{Algorithms for Computing Maximal SCSTTR Symmetry-based Palindromes}\label{sec:algo_symmetric}

This section considers algorithms for computing maximal SCSTTR sym-palindromes.
The main idea is the same as Gusfield's algorithm;
to use the outward LCE query on the matching model in SCSTTR for each center position.
Then, the complexity of the algorithm can be written as $O(nq + t_\mathrm{SCSTTR})$ time and $O(n + s_\mathrm{SCSTTR})$ space, where $q$ is the outward LCE query time, $t_\mathrm{SCSTTR}$ is the construction time of the data structure for the LCE query on the matching model, and $s_\mathrm{SCSTTR}$ is the space of the data structure.
Then, we obtain the following results for several matching models in SCSTTR.

\begin{theorem}
  All maximal SCSTTR sym-palindromes can be computed with $O(n)$ space for the following matching models:
  \begin{enumerate}
    \item For linearly sortable alphabets, all maximal Theta sym-palindromes can be computed in $O(n)$ time.
    \item For ordered alphabets, all outward/inward maximal Cartesian-tree sym-palindromes can be computed in $O(n \log n)$ time.
    \item For linearly sortable alphabets, all maximal parameterized sym-palindromes can be computed in $O(n \log (\sigma + \pi))$ time.
    \item For linearly sortable alphabets, all maximal order-preserving sym-palindromes can be computed in $O(n \log \log^2 n / \log \log \log n)$ time.
    \item For general unordered alphabets, all maximal palindromic-structure sym-palindromes can be computed in $O(n \min\{ \sqrt{\log n}, \log \sigma / \log \log \sigma \})$ time.
  \end{enumerate}
\end{theorem}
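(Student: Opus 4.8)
The plan is to reuse Gusfield's reduction: for each of the $2n-1$ centers $c\in\{1,1.5,\ldots,n\}$, the maximal SCSTTR sym-palindrome centered at $c$ is determined by a single outward LCE query under the relevant matching relation $\approx$. Writing $\lcp_\approx$ for the length of the longest common prefix measured under $\approx$, the arm condition $\rev{(P[1..\lfloor|P|/2\rfloor])}\approx P[\lceil|P|/2\rceil..|P|]$ is exactly captured by $\OutLCE(i,j)=\lcp_\approx(\rev{T[1..i]},T[j..|T|])$ for the two positions straddling $c$. Hence the whole task reduces to preprocessing $T$ so that these outward (and, for the inward Cartesian-tree variant, inward) LCE queries are answered fast. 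For each model I would build the model-specific suffix-tree-like structure on a suitable encoding of $T$ together with that of $\rev{T}$, enhance it with a constant-time LCA data structure, and read each LCE off the string-depth of the LCA of the two leaves representing $\rev{T[1..i]}$ (a suffix of the encoded $\rev{T}$) and $T[j..|T|]$ (a suffix of the encoded $T$). Because substring consistency of $\approx$ makes the set of matching arm-lengths prefix-closed, this LCA string-depth equals the sought LCE, giving $q=O(1)$ queries and the announced bound $O(nq+t_{\mathrm{SCSTTR}})$ time, $O(n+s_{\mathrm{SCSTTR}})$ space, in which the construction cost dominates.

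It then remains to instantiate $t_{\mathrm{SCSTTR}}$ and $s_{\mathrm{SCSTTR}}$ per model. Complementary (Theta) matching is the easy case because its encoding is \emph{local}: as $f$ is a morphic involution, $\rev{T[1..i]}\approx_{\mathrm{rc}}T[j..|T|]$ iff $f(\rev{T[1..i]})=T[j..|T|]$, so an ordinary suffix tree of $T\$f(\rev{T})\#$ --- built in $O(n)$ time and space for linearly-sortable alphabets --- already supports $O(1)$ outward LCE, giving case~1. For the remaining four models I would plug in the parameterized suffix tree (via $\mathsf{PE}$), the order-preserving suffix tree (via $\mathsf{Code}$), the Cartesian-tree suffix tree (via $\PD$), and the palindromic-structure suffix tree (via $\LPal$), whose known construction times are $O(n\log(\sigma+\pi))$, $O(n\log\log^2 n/\log\log\log n)$, $O(n\log n)$, and $O(n\min\{\sqrt{\log n},\log\sigma/\log\log\sigma\})$ respectively, each in $O(n)$ space; this yields cases~2--5. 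For Cartesian-tree matching, since $\approx_{\mathrm{ct}}$ is not closed under reversal, the outward variant is handled by $\OutLCE$ and the inward variant by $\InLCE$, but both are cross queries between a suffix of $\rev{T}$ and a suffix of $T$, so a single Cartesian-tree suffix tree on $T$ and $\rev{T}$ answers them both and the $O(n\log n)$ bound is retained.

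The main obstacle is that, unlike exact or complementary matching, the encodings $\mathsf{PE}$, $\mathsf{Code}$, $\PD$, and $\LPal$ are \emph{context-sensitive}: the encoding of a substring is \emph{not} the corresponding substring of the encoding of the whole string, since the back-references may point outside the substring. One therefore cannot concatenate fixed encoded strings and build a plain suffix tree; correctness rests on the fact that each model-specific suffix tree stores, along every root-to-leaf path, the encoding of the suffix taken \emph{relative to that suffix's own starting position}, with references escaping the suffix reset. Because every such encoding only references earlier positions, this relative encoding is prefix-preserving, so the LCA string-depth of two leaves equals precisely the longest prefix on which the corresponding two suffixes match under $\approx$. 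The point to verify carefully is that this property survives in the generalized structure built on both $T$ and $\rev{T}$, so that the cross query comparing a suffix of $\rev{T}$ --- that is, a reversed prefix of $T$ read in the direction demanded by the symmetry condition --- against a suffix of $T$ indeed returns $\lcp_\approx(\rev{T[1..i]},T[j..|T|])$; this is where substring consistency of the relation is indispensable.
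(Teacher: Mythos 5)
Your framework and four of the five instantiations coincide with the paper's own proof: the Gusfield-style reduction to one outward LCE query per center, answered via LCA string depth on the model-specific suffix tree built over (the encodings of) the suffixes of $T$ and $\rev{T}$, with construction time dominating. Cases 1, 2, 3, 5 and the \emph{outward} half of case 4 are argued exactly as in the paper, and your remark about the context-sensitive encodings $\mathsf{PE}$, $\mathsf{Code}$, $\PD$, $\LPal$ is the right correctness caveat.

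There is, however, a genuine gap in the \emph{inward} half of case 4. You assert that the inward variant is ``handled by $\InLCE$'' with, implicitly, a single query per center, symmetric to the outward case. That step fails: for the inward condition, arm length $\ell$ at center $c$ is valid iff $T[c-\ell..c-1] \approx_{\mathrm{ct}} \rev{(T[c+1..c+\ell])}$, i.e., iff $\InLCE(c-\ell,\,c+\ell) \geq \ell$, so the positions at which the query must be asked depend on the unknown $\ell$ itself --- the two arms are anchored at the (unknown) outer endpoints of the palindrome and read toward the center. Unlike the outward case, where both arms start at the center and one query determined by $c$ alone returns the maximal arm length, here there is no single pair of positions to query; and since $\approx_{\mathrm{ct}}$ is not closed under reversal, you cannot re-anchor the inward condition at the center to turn it into an outward one. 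The paper closes this hole by noting that the set of valid arm lengths is downward closed (substring consistency applied to suffixes of the two arms), so the maximal $\ell$ can be found by binary search using $O(\log n)$ inward LCE queries per center; this costs $O(n \log n)$ in total, which is absorbed by the $O(n \log n)$ construction time of the Cartesian suffix tree, preserving the claimed bound. Your proposal needs this binary-search step (or an equivalent argument) for the inward Cartesian-tree case to be complete.
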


These results can be obtained by using SCSTTR suffix trees~\cite{Farach-ColtonFM00,gusfield97:_algor_strin_trees_sequen,Baker96,KimEFHIPPT14,ParkBALP20,IIT13}.
The followings are common properties among all these SCSTTR suffix trees:
\begin{enumerate}
  \item Each leaf node corresponds one-to-one to the encoding of each suffix.
  \item The string depth of the LCA of two leaves corresponds to the value of LCP of the encodings of the suffixes.
\end{enumerate}
Since these properties hold, we can compute the LCP value of encodings of two suffixes by computing the LCA. In the sequel, we show the details of each SCSTTR suffix tree.

\subsection{Maximal Theta Sym-palindromes}
An outward complementary LCE for the center position $c$ in $T$ can be computed by comparing $T[\lfloor{c+1}\rfloor..n]$ and $f(\rev{T[1..\lceil{c-1}\rceil]})$.
The above operation can be
done in constant time by constructing
the suffix tree of the string $T\$f(\rev{T})\#$ with delimiter characters $\$$ and $\#$.
\begin{lemma}\label{lem:rc_st}
  The complementary suffix tree can be constructed in $O(n)$ time and space for linearly sortable alphabets.
  Also, this data structure can answer
  an outward complementary LCE query in $O(1)$ time.
\end{lemma}

From Lemma~\ref{lem:rc_st}, we obtain the following proposition.
\begin{proposition}
  For linearly sortable alphabets, all maximal Theta sym-palindromes can be computed in $O(n)$ time with $O(n)$ space.
\end{proposition}

\subsection{Maximal Cartesian-tree Sym-palindromes}

As for a Cartesian suffix tree, the definition and the construction time are as follows:
\begin{definition}[\cite{ParkBALP20}]
  The Cartesian suffix tree of a string $T[1..n]$ is a compacted trie constructed with $\PD_{T[i..n]} \cdot \$$ for every $1\leq i \leq |T|$, where $\$ \notin \Sigma$ and $\$ \prec c$ hold for any character $c \in \Sigma$.
\end{definition}

\begin{lemma}[\cite{ParkBALP20}]\label{lem:cst_st}
  The Cartesian suffix tree of a string of length $n$ can be constructed in $O(n \log n)$ time with $O(n)$ space for ordered alphabets.
  Also, an outward/inward Cartesian-tree LCE query can be answered in $O(1)$ time by using this data structure of $T\$\rev{T}\#$.
\end{lemma}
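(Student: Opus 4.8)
The plan is to build the Cartesian suffix tree of the single string $U = T\$\rev{T}\#$ of length $2n+2$ and to reduce every outward/inward Cartesian-tree LCE query to one lowest-common-ancestor (LCA) query on it. The $O(n \log n)$-time, $O(n)$-space construction for ordered alphabets is exactly the result of Park et al.~\cite{ParkBALP20} applied to a string of length $O(n)$; augmenting the resulting tree with a constant-time LCA structure~\cite{BenderF00} costs an additional $O(n)$ time and space. Since each leaf corresponds to the $\PD$-encoding of a suffix of $U$, what remains is to argue that the string depth of the LCA of two suitably chosen leaves, after a trivial truncation, equals the desired Cartesian-tree LCE value.

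The enabling combinatorial fact I would establish first is that $\PD$ is prefix-consistent: for any string $W$ and any $\ell \le |W|$ we have $\PD_{W[1..\ell]} = \PD_W[1..\ell]$, because $\PD_W[k]$ depends only on $W[1..k]$. Combined with the characterization $X \approx_{\mathrm{ct}} Y \Leftrightarrow \PD_X = \PD_Y$ of Park et al.~\cite{ParkBALP20}, this yields that the largest $\ell$ with $X[1..\ell] \approx_{\mathrm{ct}} Y[1..\ell]$ equals $\lcp(\PD_X, \PD_Y)$; that is, the Cartesian-tree LCE of $X$ and $Y$ is precisely the ordinary LCP of their $\PD$-encodings, which is exactly the quantity delivered by the LCA's string depth for the two suffixes indexed by the chosen leaves.

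It then remains to pick the right leaves. The suffix of $U$ beginning at position $p\le n$ has $\PD_{T[p..n]}$ as the prefix of its encoding, and the suffix beginning at position $2n-q+2$ (inside the $\rev{T}$ block) has $\PD_{\rev{T[1..q]}}$ as the prefix of its encoding. Hence an inward query $\lcp(T[i..n], \rev{T[1..j]})$ is read off from the leaves at positions $i$ and $2n-j+2$, and an outward query $\lcp(\rev{T[1..i]}, T[j..n])$ from the leaves at positions $2n-i+2$ and $j$.

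I expect the main obstacle to be that the delimiters $\$$ and $\#$ do not behave as the guaranteed mismatches they are under exact matching: because $\PD$ maps distinct characters to numeric values and $\$$ (being the smallest symbol) receives $\PD$-value $0$, an encoding can spuriously keep agreeing past the end of the arm it represents. I would resolve this by capping: since the relevant arm lengths are known, the answer to an inward (resp. outward) query is the minimum of the LCA string depth and $\min\{n-i+1, j\}$ (resp. $\min\{i, n-j+1\}$). Within this cap both encodings coincide with the genuine $\PD$-encodings of the two arms by prefix-consistency, so the truncated value is exactly the Cartesian-tree LCE; and the whole query is one LCA plus a minimum, i.e.\ $O(1)$ time.
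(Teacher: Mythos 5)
Your proposal is correct and takes essentially the same approach as the paper, which establishes this lemma simply by citing Park et al.'s $O(n \log n)$ construction and the generic SCSTTR suffix-tree properties stated at the start of the appendix (each leaf is the encoding of a suffix; the string depth of an LCA equals the LCP of the corresponding encodings). If anything, your write-up is more careful than the paper's: the prefix-consistency of $\PD$, and the capping step that neutralizes spurious agreement of the encodings across the delimiters $\$$ and $\#$, are genuine subtleties that the paper leaves entirely implicit.
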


From Lemma~\ref{lem:cst_st}, we obtain the following proposition.
\begin{proposition}
  For ordered alphabets, all outward maximal Cartesian-tree sym-palindromes can be computed in $O(n \log n)$ time with $O(n)$ space.
\end{proposition}

As for the case of inward maximal Cartesian-tree sym-palindromes, it is unclear where is the starting position of inward LCE queries.
By combining the LCE queries and the binary search, we obtain the following proposition
\begin{proposition}
  For ordered alphabets, all inward maximal Cartesian-tree sym-palindromes can be computed in $O(n \log n)$ time with $O(n)$ space.
\end{proposition}

\subsection{Maximal Parameterized Sym-palindromes}
The definition of a parameterized suffix tree is as follows:
\begin{definition}[\cite{Baker96}]
  The parameterized suffix tree of a string $T[1..n]$ is a compacted trie that stores the set $\{\mathsf{PE}_{T[i..n]} \mid 1 \leq i \leq n\}$.
\end{definition}
Therefore, by constructing the parameterized suffix tree of $T\$ \rev{T} \#$ and computing the LCP value of $\mathsf{PE}_{T[\lfloor{c+1}\rfloor..n]}$ and $\mathsf{PE}_{\rev{T[1..\lceil{c-1}\rceil]}}$, we can compute the length of the maximal parameterized palindrome for each center position $c$.

\begin{lemma}[\cite{Baker96}]\label{lem:para_st}
  The parameterized suffix tree can be constructed in $O(n \log (\sigma + \pi))$ time and $O(n)$ space for linearly sortable alphabets.
  Also, this data structure can answer
  an outward parameterized LCE query in $O(1)$ time.
\end{lemma}

From Lemma~\ref{lem:para_st}, we obtain the following proposition.
\begin{proposition}
  For linearly sortable alphabets, all maximal parameterized sym-palindromes can be computed in $O(n \log (\sigma+\pi))$ time with $O(n)$ space.
\end{proposition}

\subsection{Maximal Order-preserving Sym-palindromes}
An order-preserving suffix tree is defined as follows:
\begin{definition}[\cite{CrochemoreIKKLP16}]
  The complete order-preserving suffix tree of a string $T[1..n]$ is a compacted trie that stores $\mathsf{Code}_{T[i..n]}$ for all $1\leq i\leq n$.
  Also, the incomplete order-preserving suffix tree is a compacted trie such that some of the edge labels of the complete order-preserving suffix tree are missing.
\end{definition}

By using the (incomplete) order-preserving suffix tree, maximal order-preserving sym-{\linebreak}palindromes can be computed the same as with maximal parameterized palindrome.

\begin{lemma}[\cite{CrochemoreIKKLP16}]\label{lem:op_st}
  The (incomplete) order-preserving suffix tree can be constructed in \linebreak
  $O(n \log \log^2 n / \log \log \log n)$ time and $O(n)$ space for linearly sortable alphabets.
  Also, this data structure can answer
  an outward order-preserving LCE query in $O(1)$ time.
\end{lemma}

From Lemma~\ref{lem:op_st}, we obtain the following proposition.
\begin{proposition}
  For linearly sortable alphabets,
  all maximal order-preserving sym-palindromes can be computed in
  $O(n \log \log^2 n / \log \log \log n)$ time with $O(n)$ space.
\end{proposition}

\subsection{Maximal Palindromic-structure Sym-palindromes}
The definition of a palindromic suffix tree is as follows:
\begin{definition}
  The palindromic suffix tree of a string $T[1..n]$ is a compacted trie that represents $\LPal_{T[i..n]}$ of suffix $T[i..n]$ for all $1 \leq i \leq n$.
\end{definition}

Also, the following result is known:
\begin{lemma}[\cite{IIT13}]\label{lem:pal_st}
  The palindromic suffix tree of a string of length $n$ can be constructed in\linebreak
  $O(n \min\{ \sqrt{\log n}, \log \sigma / \log \log \sigma \})$ time
  with $O(n)$ space for general unordered alphabets, where $\sigma$ is the number of distinct characters in the string.
  Also, an outward palindromic-structure LCE query can be answered in $O(1)$ time by using this data structure of $T\$\rev{T}\#$.
\end{lemma}

From Lemma~\ref{lem:pal_st}, we obtain the following proposition.
\begin{proposition}
  For general unordered alphabets, all maximal palindromic-structure sym-palindromes can be computed in $O(n \min\{ \sqrt{\log n}, \log \sigma / \log \log \sigma \})$ time with $O(n)$ space.
\end{proposition}

\section{Algorithms for Computing Maximal SCSTTR Reversal-based\linebreak Palindromes}\label{sec:algo_reverse}

This section considers algorithms for computing maximal SCSTTR rev-palindromes.
If we were to use the SCSTTR suffix tree and outward LCE queries as in the previous section,
how to choose the starting positions of outward LCE queries is unclear.
Therefore, a na\"ive approach would require $O(n)$
inward
LCE queries for each
center position,
and the total complexity will be $O(n^2 + t_\mathrm{SCSTTR})$ time and $O(n + s_\mathrm{SCSTTR})$ space.
By combining
inward
LCE queries and binary search, we can further achieve $O(n \log n + t_\mathrm{SCSTTR})$ time and $O(n + s_\mathrm{SCSTTR})$ space with this approach.
This section shows $O(n)$-time algorithms without constructing SCSTTR suffix trees:
\begin{theorem}
  There exist $O(n)$-time algorithms which compute
  \begin{enumerate}
    \item all maximal Theta rev-palindromes for general unordered alphabets,
    \item all maximal Cartesian-tree rev-palindromes for linearly sortable alphabets,
    \item all maximal parameterized rev-palindromes for linearly sortable alphabets,
    \item all maximal order-preserving rev-palindromes for general ordered alphabets, and
    \item all maximal palindromic-structure rev-palindromes for general unordered alphabets.
  \end{enumerate}
\end{theorem}

\paragraph*{Outline of Computing SCSTTR Reversal-based Palindromes}
At first, we show a framework for computing maximal SCSTTR rev-palindromes, which is a generalization of Manacher's algorithm~\cite{Manacher75}.
For the sake of simplicity, we denote SCSTTR rev-palindromes by just \emph{palindromes} in the description of the framework below.
In the following, we describe how to compute all odd-lengthed maximal palindromes.
Even-lengthed maximal palindromes can be obtained analogously.

We consider finding the odd-lengthed maximal palindromes in ascending order of the center position.
Let $\MPal[c]$ denote the interval corresponding to the odd-lengthed maximal palindrome centered at position $c$.
The length of $\MPal[1]$ is always one.
Assuming that all odd-lengthed maximal palindromes whose center is at most $c-1$ have been computed and let $\MPal[c'] = [b, e]$ be the longest interval whose ending position $e$ is largest among $\{\MPal[1],\ldots,\MPal[c-1]\}$.
Further, let $P' = T[b.. e]$.
Also, let $d$ be the distance between $c'$ and $c$; namely, $d = c - c'$.
See also Figure~\ref{fig:manacher} for illustration.
\begin{figure}[htb]
  \centering
  \includegraphics[width=0.8\linewidth]{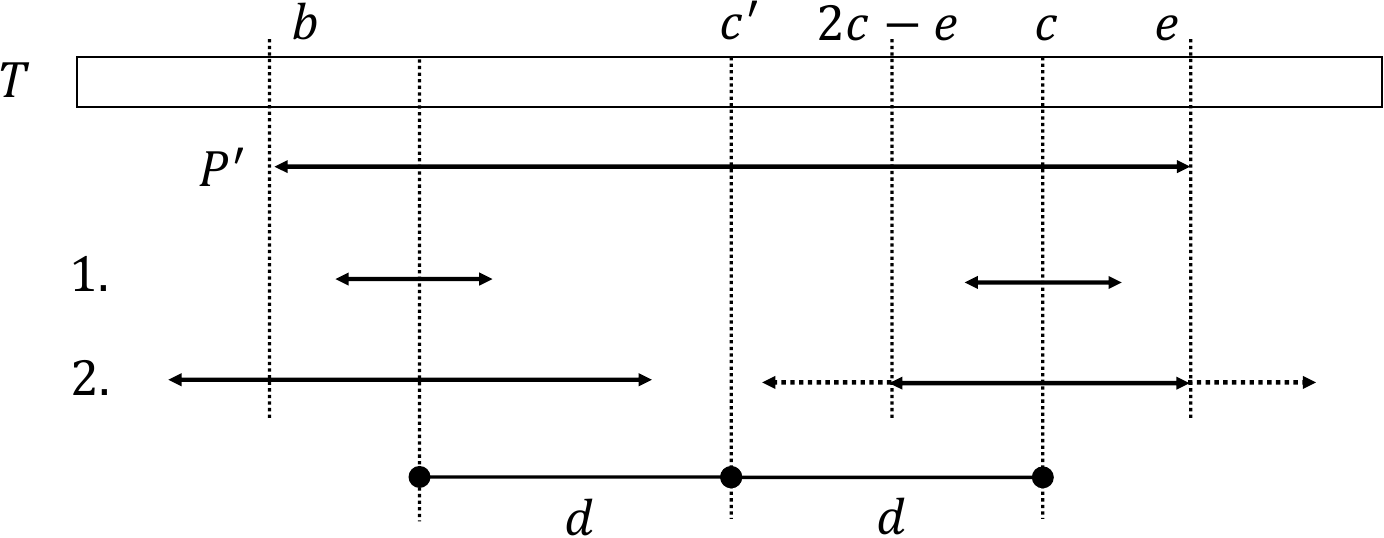}
  \caption{
  Illustration for two sub-cases in $c \leq e$.}
  \label{fig:manacher}
\end{figure}

If $T[i-1..j+1]$ is a palindrome, we say palindrome $T[i..j]$ can be extended.
Also, we call computing the length of $\MPal[(i+j)/2]$ computing the extension of $T[i..j]$.

Now, we consider how to compute $\MPal[c]$.
If $e < c$, then $c = e + 1$ holds.
We then compute $\MPal[c]$ by computing the extension of $T[e+1..e+1]$.
Otherwise (i.e., if $c \leq e$ holds), we compute $\MPal[c]$ as in the following two cases according to the relationship between $P'$ and $\MPal[c'-d]$.
\begin{enumerate}
  \item
    If the starting position of $\MPal[c'-d]$ is larger than $b$, then $|\MPal[c]| = |\MPal[c'-d]|$ holds (by Lemma~\ref{lem:symmetricity_of_SCSTTR_revpal}).
    Thus we \emph{copy} $\MPal[c'-d]$ to $\MPal[c]$ with considering symmetry.
  \item Otherwise, the ending position of $\MPal[c]$ is at least $e$.
    Then we compute the extension of $T[2c-e..e]$, and $P'$ is updated to $\MPal[c]$ if $T[2c-e..e]$ can be extended.
\end{enumerate}

Note that, in the exact matching model, this framework is identical to that of Manacher's algorithm.
The framework includes two non-trivial operations; copying and extension-computing.
How many times both operations are called can be analyzed in the same manner as for Manacher's algorithm, and is $O(n)$.
Also, each copying operation can be done in constant time.
Thus, if we can perform each extension-computing in (amortized) constant time, the total time complexity becomes $O(n)$.
The following sections focus only on how to compute an extension in (amortized) constant time
under each of
the two matching models; the complementary-matching model and the Cartesian-tree matching model.

\subsection{Maximal Theta Reversal-based Palindromes}

In the complementary-matching model, the correspondence of matching characters, although different, is predetermined, and na\"ive character comparisons can be done in constant time.
Thus, the following result can be obtained immediately.

\begin{proposition}
  For general unordered alphabets, all maximal Theta rev-palindromes can be computed in $O(n)$ time with $O(n)$ space.
\end{proposition}
\subsection{Maximal Cartesian-tree Reversal-based Palindromes}

Here, we consider how to compute the extension of a maximal Cartesian-tree rev-palindrome $T[i..j]$.
We consider the difference between $\PD_{T[i..j]}$ and $\PD_{T[i-1..j+1]}$.
There are three types of positions of
$\PD_{T[i-1..j+1]}$ in which values differ from $\PD_{T[i..j]}$;
the first and last positions (which do not exist in $\PD_{T[i..j]}$),
and each position $k$ such that $\PD_{T[i..j]}[k] = 0$ and $T[i-1] \preccurlyeq T[k]$ with $i \leq k \leq j$ (i.e., $\PD_{T[i-1..j+1]}[k+1] \neq 0$).
Let $m_{[i..j]}$ be the leftmost occurrence position of the smallest value in $T[i..j]$.
Then, $k'$ such that $\PD_{T[i..j]}[k'] = 0$ is always to the left or equal to $m_{[i..j]}$.
However, since the number of such positions is not always constant, the time required for computing the extension from $T[i..j]$ to $T[i-1..j+1]$ can be $\omega(1)$.
For example, when $T[i-1.. j+1] = \mathtt{becaebdaefc}$,
$\PD_{T[i..j]} = [\underline{0}, \underline{0}, 0, 1, 2, 1, 4, 1, 1]$ and $\PD_{T[i-1.. j+1]} = [0,\underline{1},\underline{2},0,1,2,1,4,1,1,3]$ hold.
There are two differences between $\PD_{T[i..j]}$ and $\PD_{T[i-1.. j+1]}[2..10]$ (underlined).
We will consider the total number of updates of such positions through the entire algorithm.

Let $c$ be the center of $T[i..j]$ and $T[b..e]$ be the maximal Cartesian-tree rev-palindrome centered at $c$.
Also, let $Z_c$ be the number of $0$'s in $\PD_{T[i..j]}$, $E_c$ be the length of the extension of $T[i..j]$ (i.e., $E_c = i-b$), and $U_c$ be the number of updates of positions of $0$ from $\PD_{T[i..j]}$ to $\PD_{T[b..e]}$.
Let $T[i'..j']$ be the next palindrome of center $c'$ which we have to compute the extension of, after computing the extension of $T[i..j]$.
Namely, $T[i'..j']$ is the longest proper suffix palindrome of $T[b..e]$ if $|T[b..e]| > 1$.
Otherwise, $T[i'..j'] = T[e+1..e+1]$ holds.

As the first step to considering the total sum $\sum{U_c}$ over the entire algorithm, we consider the simpler case such that each character is distinct from the other in the string.
We call this case \emph{permutation-Cartesian-tree matching}.

In this case, $c=m_{[i..j]}$ holds since $T[m_{[i..j]}]$ is the only smallest value in $T[i..j]$ and $m_{[i..j]}$ is the root of $\CT(T[i..j])$ and $\CT(\rev{T[i..j]})$.
Also, $m_{[i..j]} = m_{[b..e]}$ holds.
Now we consider the starting position of the next palindrome $T[i'..j']$.

\begin{lemma}\label{lem:pct}
  Let $T[i'..j']$ be the next palindrome. Then, $i' > m_{[i..j]}$ holds.
\end{lemma}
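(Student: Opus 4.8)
In the permutation-Cartesian-tree case, every character of $T$ is distinct, so $T[i..j]$ has a unique smallest element at position $m_{[i..j]}$, which is the root of both $\CT(T[i..j])$ and $\CT(\rev{T[i..j]})$. We are told $c = m_{[i..j]}$ and $m_{[i..j]} = m_{[b..e]}$. The claim to prove is that the next palindrome $T[i'..j']$ (the longest proper suffix palindrome of $T[b..e]$, or $T[e+1..e+1]$ if $|T[b..e]|=1$) starts strictly to the right of the root $m_{[i..j]}$.

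**Plan of attack.**
The plan is to argue by contradiction: suppose $i' \le m_{[i..j]}$. First I would handle the degenerate branch $|T[b..e]|=1$: then $T[i'..j'] = T[e+1..e+1]$ with $i' = e+1 > m_{[b..e]} = m_{[i..j]}$ immediately, so the claim holds trivially. For the main branch $|T[b..e]|>1$, the suffix palindrome $T[i'..j']$ is a proper suffix of $T[b..e]$ ending at $e = j'$. The key structural fact I want to exploit is that $T[i'..j']$ is itself a Cartesian-tree rev-palindrome, so its center is the position of its own minimum, $m_{[i'..j']}$, which must be the root of $\CT(T[i'..j'])$. If $i' \le m_{[b..e]}$, then the global minimum position $m_{[b..e]}$ of $T[b..e]$ lies inside the window $[i',j']$, and since all characters are distinct, $T[m_{[b..e]}]$ is also the unique smallest character of $T[i'..j']$; hence $m_{[i'..j']} = m_{[b..e]} = m_{[i..j]}$.

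**The contradiction.**
The center of the suffix palindrome $T[i'..j']$ is then forced to be $m_{[i..j]}$, i.e. $\frac{i'+j'}{2} = \frac{i'+e}{2} = m_{[i..j]} = c$. But $c$ is also the center of the ambient palindrome $T[b..e]$, whose center is $\frac{b+e}{2}$. Two palindromes sharing the same right endpoint $e$ and the same center must coincide as intervals, which would force $i'=b$, contradicting that $T[i'..j']$ is a \emph{proper} suffix of $T[b..e]$. Thus the assumption $i' \le m_{[i..j]}$ is untenable, and $i' > m_{[i..j]}$ follows.

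**Where the difficulty lies.**
The routine part is the bookkeeping on centers and endpoints; the one step that needs genuine care is justifying that the minimum of the sub-window $T[i'..j']$ really is inherited from the minimum $m_{[b..e]}$ of the enclosing window whenever $i' \le m_{[b..e]} \le j'=e$. This is where the permutation assumption (all characters distinct) does the real work: it guarantees a \emph{unique} minimum, so there is no ambiguity in which position becomes the Cartesian-tree root, and the root of a Cartesian-tree rev-palindrome is pinned to be its center. I would make sure to state explicitly that distinctness forbids ties, since with repeated characters the leftmost-minimum tie-breaking rule could place the root away from $m_{[b..e]}$ and the argument would break — which is precisely why this lemma is proved first in the restricted permutation model before attacking the general case.
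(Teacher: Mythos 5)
Your proof is correct and takes essentially the same approach as the paper: both argue by contradiction from the assumption $i' \leq m_{[i..j]}$, using that distinctness forces the minimum of the sub-window $T[i'..j']$ to be inherited from $m_{[b..e]}$ and that the root (hence center) of a permutation-Cartesian-tree rev-palindrome is pinned to its minimum position. The only cosmetic difference is the final contradiction target: you contradict properness of the suffix palindrome (forcing $i' = b$), while the paper contradicts the minimality of $T[m_{[b..e]}]$ over $[b..e]$.
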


\begin{proof}
  The next palindrome $T[i'..j']$ is either $T[e+1..e+1]$ or the longest proper suffix palindrome of $T[b..e]$.
  Since the former case is obvious, we consider only the latter in the following.
  For the sake of contradiction, we assume that $i' \leq m_{[i..j]}$.
  Then $T[m_{[i'..j']}] < T[m_{[i..j]}]$ holds.
  Since $m_{[i'..j']} \in [b..e]$, this leads to a contradiction with $m_{[i..j]} = m_{[b..e]}$.
\end{proof}

From Lemma~\ref{lem:pct}, $0$s in $\PD_{T[i..j]}$ and $0$s in $\PD_{T[i'..j']}$ all correspond to different positions in $T$.
Hence, $\sum{Z_c} \leq n$ holds.
Also, $U_c \leq Z_c + E_c$ and $\sum{E_c} \leq n$ holds.
Therefore, we obtain $\sum{U_c} \leq \sum{Z_c} + \sum{E_c} \leq 2n$.
In the preprocessing,
we compute $\PD_T$ and $\PD_{\rev{T}}$ in $O(n)$ time~\cite{ParkBALP20}
and create a link from each position $p$ to $q$
with $\max_{1 \leq q < p} \{ q : T[q] \preccurlyeq T[p]\}$.
Then, when we compute the extension of $T[i..j]$,
each $0$'s position to be updated can be obtained in $O(1)$ time
by traversing back the links from $i-1$.
Thus, the extension from $T[i..j]$ to $T[b..e]$ can be computed in $O(E_c + U_c)$ time by updating $0$s
and checking the equality's of values of all updated positions between $\PD_{T[i-\ell..j+\ell]}$ and $\PD_{\rev{T[i-\ell..j+\ell]}}$
for each $1 \leq \ell \leq E_c$,
since we can access each element of $\PD_{T[i-\ell..j+\ell]}$ and $\PD_{\rev{T[i-\ell..j+\ell]}}$ in constant time if we have $\PD_T$ and $\PD_{\rev{T}}$~\cite{ParkBALP20}.
Therefore, we obtain the following result:
\begin{lemma}
  For linearly sortable alphabets, all maximal permutation-Cartesian-tree rev-palindromes can be computed in $O(n)$ time with $O(n)$ space.
\end{lemma}

Now we consider the sum $\sum{U_c}$ in the general case.
As for the relationship between $Z_c$ and $Z_{c'}$, we show the following lemma:

\begin{lemma}\label{lem:ct}
  For the center $c$ of $T[i.. j]$ and the center $c'$ of the next palindrome $T[i'.. j']$,
  $Z_{c'} \leq Z_c + E_c - U_c$ holds.
\end{lemma}

\begin{proof}
  The next palindrome $T[i'..j']$ is either $T[e+1..e+1]$ or the longest proper suffix palindrome of $T[b..e]$.
  Since the former case is obvious, we consider only the latter case.
  By definitions, the number of $0$s in $\PD_{T[b..e]}$ is equal to $Z_c + E_c - U_c$.
  Also, since ${T[b..e]}$ is a Cartesian-tree palindrome, the number of $0$s in $\PD_{\rev{T[b..e]}}$ is also equal to $Z_c + E_c - U_c$.
  Similarly,
  since $T[i'..j']$ is a Cartesian-tree palindrome, the number of $0$s in $\PD_{\rev{T[i'..j']}}$ is equal to $Z_{c'}$.
  Further, the number of $0$s in $\PD_{\rev{T[b..e]}}$ is at least $Z_{c'}$
  since $\rev{T[i'..j']}$ is a prefix of $\rev{T[b.. e]}$~(see also Figure~\ref{fig:ct_example}).
  Therefore  $Z_{c'} \leq Z_c + E_c - U_c$ holds.
\end{proof}
\begin{figure}[tb]
  \centerline{
    \includegraphics[width=0.8\linewidth]{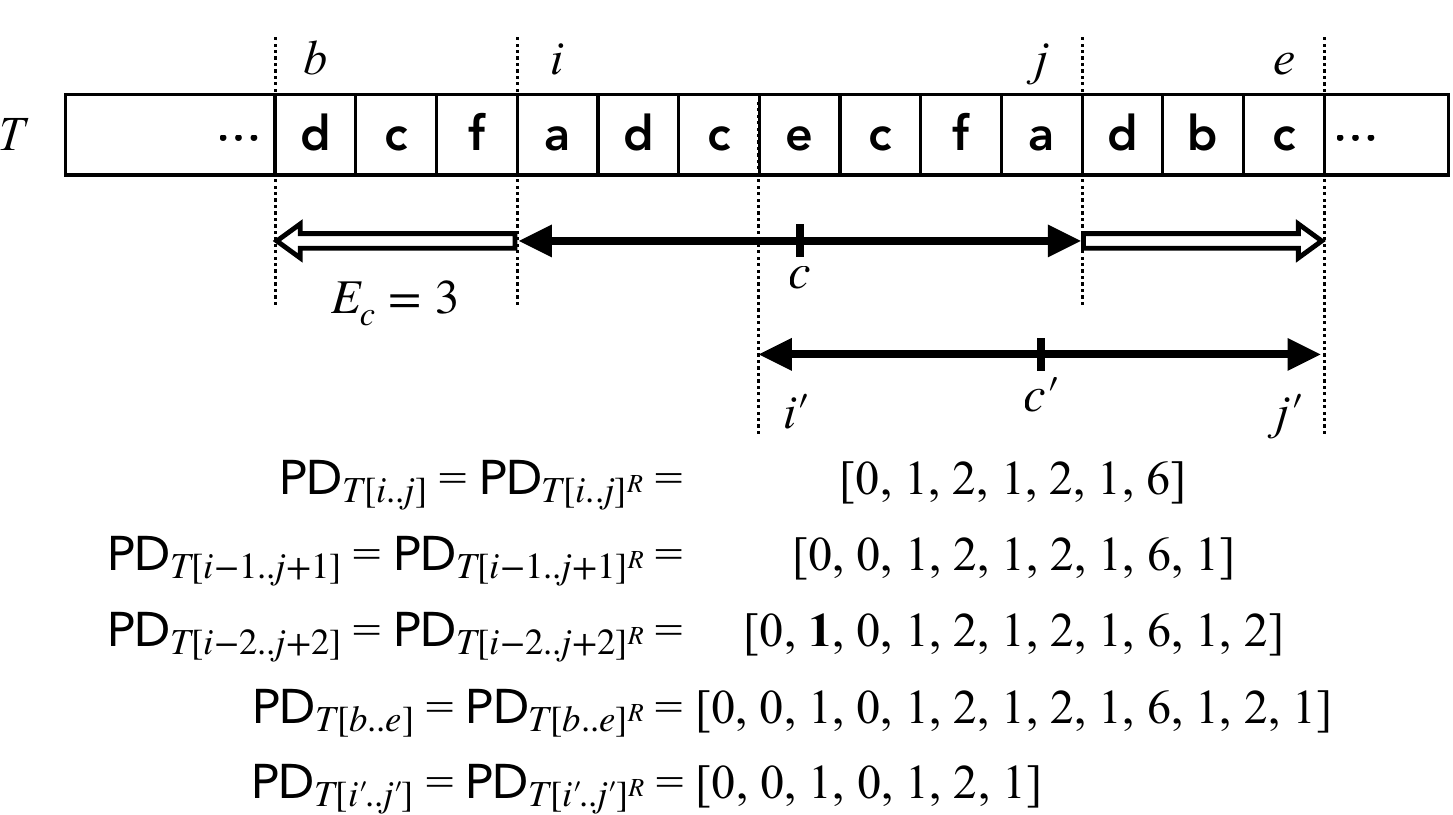}
  }
  \caption{A concrete example for Cartesian-tree palindromes. In this example, $Z_c = 1$, $E_c = 3$, and $U_c = 1$ hold
    since when $\PD_{T[i-1.. j+1]}$ grows to $\PD_{T[i-2.. j+2]}$,
  a single value (shown in bold font) is updated from $0$ to non-zero.}
  \label{fig:ct_example}
\end{figure}
From Lemma~\ref{lem:ct}, $U_c \leq Z_c - Z_{c'} +E_c$ holds.
Therefore, the total number of updates $\sum{U_c} \leq \sum(Z_c - Z_{c'}) + \sum{E_c} \leq 2n$.
Hence, we obtain the following result:

\begin{proposition}
  For linearly sortable alphabets, all maximal Cartesian-tree rev-palindromes can be computed in $O(n)$ time with $O(n)$ space.
\end{proposition}

\subsection{Maximal Parameterized Reversal-based Palindromes}
Now we consider determining whether the parameterized palindrome $T[i..j]$ can be extended by one character at each end.
If $T[i-1..j+1] \approx_{\mathrm{para}} \rev{T[i-1..j+1]}$ holds, then $\mathsf{PE}_{T[i-1..j+1]} = \mathsf{PE}_{\rev{T[i-1..j+1]}}$.
Since $T[i..j]$ is a parameterized palindrome, $\mathsf{PE}_{T[i..j]} = \mathsf{PE}_{\rev{T[i..j]}}$.
Now, it is not difficult to see
that the number of positions in which values of
$\mathsf{PE}_{T[i-1..j+1]}$ can differ from $\mathsf{PE}_{T[i..j]}$
when aligned at the center, is at most three;
the first and last elements of $\mathsf{PE}_{T[i-1..j+1]}$ (which do not exist in $\mathsf{PE}_{T[i..j]}$),
and the position of the first element of $T[i..j]$ that equals $T[i-1]$, if such exists.
From the symmetricity, the same can be said for values
of $\mathsf{PE}_{\rev{T[i-1..j+1]}}$ from $\mathsf{PE}_{\rev{T[i..j]}}$.
Then, to determine whether $T[i-1..j+1]$ is a parameterized palindrome, we only have to consider
(1) the equality between the first elements of $\mathsf{PE}_{T[i-1..j+1]}$ and $\mathsf{PE}_{\rev{T[i-1..j+1]}}$,
(2) the equality between the last elements of $\mathsf{PE}_{T[i-1..j+1]}$ and $\mathsf{PE}_{\rev{T[i-1..j+1]}}$, and
(3) the equality between the first positions $p$ in $T[i..j]$ and $q$ in $\rev{T[i..j]}$
that respectively equals $T[i-1]$ and $T[j+1]$.
In other words, given that $T[i..j]$ is a parameterized palindrome,
$T[i-1..j+1]$ is a parameterized palindrome if and only if all three equality's hold.
See also Table~\ref{table:para}.
\begin{table}[h]
  \centering
  \begin{tabular}{|c||c|c|c|c|c|c|c|c|c|c|c|c|c|c|c|}
    \hline
    $k$                                         & $\cdots$ & $i-1$        &              &              &              &              &              &              &              &              &              & $j+1$        & $\cdots$ \\ \hline
    $T[k]$                                      & $\cdots$ & $\mathsf{c}$ & $\mathsf{a}$ & $\mathsf{a}$ & $\mathsf{c}$ & $\mathsf{a}$ & $\mathsf{e}$ & $\mathsf{b}$ & $\mathsf{d}$ & $\mathsf{b}$ & $\mathsf{b}$ & $\mathsf{d}$ & $\cdots$ \\ \hline
    $k'$                                        &          & $1$          &              &              & $p$          &              &              &              &              &              &              & $j-i+3$      &          \\ \hline
    $\mathsf{PE}_{T[i-1..j+1]}[k']$             &          & $0$          & $0$          & $1$          & $3$          & $2$          & $0$          & $0$          & $0$          & $2$          & $1$          & $3$          &          \\ \hline\hline
    $\hat{k}$                                   & $\cdots$ & $n-j$        &              &              &              &              &              &              &              &              &              & $n-i+2$      & $\cdots$ \\ \hline
    $\rev{T}[\hat{k}]$                          & $\cdots$ & $\mathsf{d}$ & $\mathsf{b}$ & $\mathsf{b}$ & $\mathsf{d}$ & $\mathsf{b}$ & $\mathsf{e}$ & $\mathsf{a}$ & $\mathsf{c}$ & $\mathsf{a}$ & $\mathsf{a}$ & $\mathsf{c}$ & $\cdots$ \\ \hline
    $\hat{k'}$                                  &          & $1$          &              &              & $q$          &              &              &              &              &              &              & $j-i+3$      &          \\ \hline
    $\mathsf{PE}_{\rev{T[i-1..j+1]}}[\hat{k'}]$ &          & $0$          & $0$          & $1$          & $3$          & $2$          & $0$          & $0$          & $0$          & $2$          & $1$          & $3$          &          \\ \hline
  \end{tabular}
  \caption{An example for the computation of whether the parameterized palindrome $T[i..j]$ can be extended.
  }
  \label{table:para}
\end{table}

The equality of (1) always holds, since $\mathsf{PE}_{T[i-1..j+1]}[1]$ and $\mathsf{PE}_{\rev{T[i-1..j+1]}}[1]$ are
always $0$ by definition of $\mathsf{PE}$.
Next, we show that the equality of (2) implies equality of (3).
Since the last elements of $\mathsf{PE}_{T[i-1..j+1]}$ and $\mathsf{PE}_{\rev{T[i-1..j+1]}}$ are equal,
this implies that the last position in $T[i..j]$ and $\rev{T[i..j]}$ that respectively equals
$T[j+1]$ and $T[i-1]$ are the same.
By reversing the string,
these correspond to the first position in $\rev{T[i..j]}$ and $T[i..j]$ that respectively equals
$T[j+1]$ and $T[i-1]$ therefore implying (3).

Thus, we need only check the equality of (2).
Given $\mathsf{PE}_T$, $\mathsf{PE}_{T[i..j]}[k]$ for any $1 \leq k \leq j - i + 1$
can be computed in constant time, i.e.,
$\mathsf{PE}_{T[i..j]}[k] = \mathsf{PE}_{T}[i+k-1]$ if $\mathsf{PE}_{T}[i+k-1] < k$ and $0$ otherwise.
Hence, the computation time of the extension of $T[i..j]$ by one character at each end is $O(1)$.
Also, since $\mathsf{PE}_T$ can be precomputed in $O(n)$ time for linearly sortable alphabets, we obtain the following result:
\begin{proposition}
  For linearly sortable alphabets, all maximal parameterized rev-palindromes can be computed in $O(n)$ time with $O(n)$ space.
\end{proposition}

\subsection{Maximal Order-preserving Reversal-based Palindromes}
From the definition, an order-preserving rev-palindrome $P$ satisfies that
$P[\lceil{c\rceil}-d] \preccurlyeq P[\lfloor{c\rfloor}+d]$ and $P[\lfloor{c\rfloor}+d] \preccurlyeq P[\lceil{c\rceil}-d]$
for integer $d$ with $1 \leq d \leq \lceil{c\rceil}-1$
where $c = \frac{1+|P|}{2}$ is the center of $P$.
This means that any order-preserving rev-palindrome must be a palindrome under the exact matching.
Therefore, maximal order-preserving rev-palindromes in $T$ can be computed in $O(n)$ time by using Manacher's algorithm.
\begin{corollary}
  For general ordered alphabets, all maximal order-preserving rev-palindromes can be computed in $O(n)$ time with $O(n)$ space.
\end{corollary}

\subsection{Maximal Palindromic-structure Reversal-based Palindromes}
Now we consider how to determine whether the palindromic-structure palindrome $T[i..j]$ can be extended by one character at each end.
The changes from the set of maximal palindromes in $T[i..j]$ to that of $T[i-1..j+1]$ are prefix palindromes and suffix palindromes of $T[i-1..j+1]$.
Therefore, if the set of prefix palindromes is equal to the set of suffix palindromes, then $T[i-1..j+1]$ is a palindromic-structure palindrome.
To efficiently compute whether this condition is satisfied, we rewrite the condition as follows:
\begin{enumerate}
  \item Among maximal palindromes in $T[1..n]$, the set of maximal palindromes starting at $i$ whose length is at most $j-i$ is equal to the set of maximal palindromes ending at $j$ whose length is at most $j-i$.
  \item The presence or absence of ``a prefix palindrome of length two beginning at position $i-1$'' and ``a suffix palindrome of length two ending at position $j+1$'' correspond.
\end{enumerate}
Since a suffix palindrome of $j+1$ is either a suffix palindrome of $j+1$ whose length is at most two or an extension of a suffix palindrome of $j$ which is not a maximal palindrome ending at $j$, the above rewrites hold.
See also Figure~\ref{fig:pal}.
\begin{figure}[ht]
  \centerline{
    \includegraphics[width=0.8\linewidth]{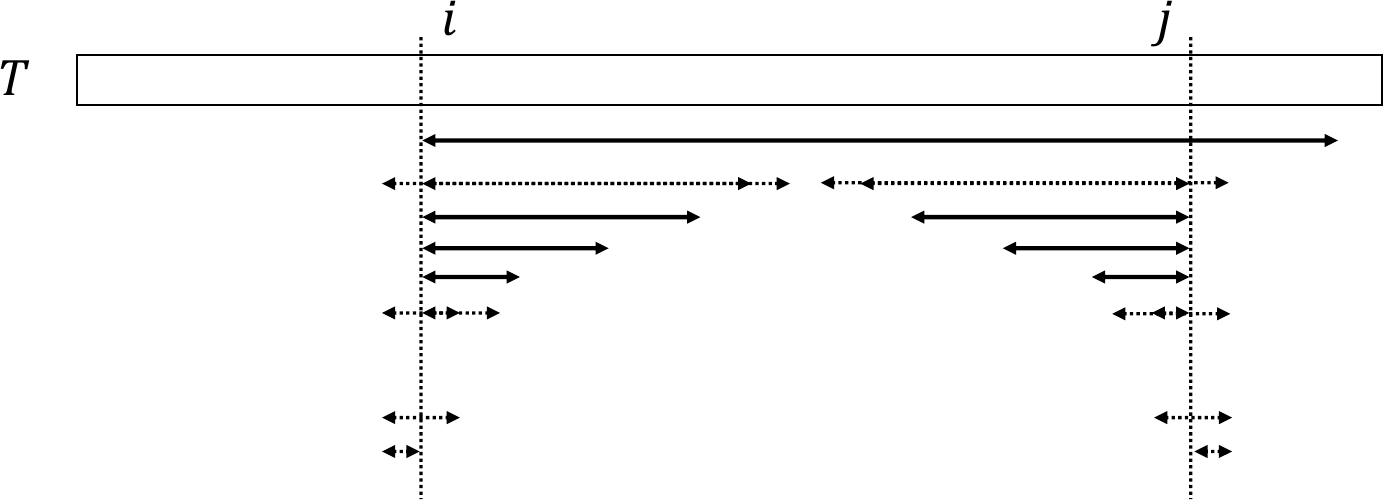}
  }
  \caption{Illustration for palindromic structures.
    Solid arrows are maximal palindromes starting at $i$ or ending at $j$, and
    dotted arrows are prefix or suffix palindromes.
    The set of suffix palindromes of $T[i-1..j+1]$ is a union of the set of the extensions of suffix palindromes of $T[i..j]$ and the set of suffix palindromes of lengths one and two.
  The prefix/suffix palindrome of length one always exists.}
  \label{fig:pal}
\end{figure}
Condition 2 can be computed easily; hence, we show how to compute condition 1 below.
First, we precompute all maximal palindromes in $T[1..n]$ by using Manacher's algorithm.
Then, we construct an ascending sequence of lengths of maximal palindromes with the same ending (starting) position, concatenated with a delimiter character $\$_i$, where $i$ is the ending (starting) position.
For example, the sequence for the ending position for a string $\mathsf{abaaa}$ is $1 \$_1 \$_2 1 3 \$_3 2 \$_4 1 2 3 \$_5$.
By using these sequences for starting/ending position and LCE data structure, condition 1 can be computed in constant time.
The length of the sequence is $O(n)$ since the number of maximal palindromes is $2n-1$.
Then we obtain the following proposition

\begin{proposition}
  For general unordered alphabets, all maximal palindromic-structure rev-palindromes can be computed in $O(n)$ time with $O(n)$ space.
\end{proposition}

Note that the concept of the palindromic structure can be generalized from maximal palindromes to maximal SCSTTR palindromes.
From the above results, if maximal SCSTTR palindromes are given, maximal SCSTTR palindromic-structure rev-palindromes can be computed in $O(n)$ time and space.

\section{Conclusions and Future Work}\label{sec:conclusion}
In this paper, we dealt with the problems of computing all maximal palindromes in a string under several
variants of string matching models.
We first showed two distinct definitions of palindromes under the non-standard matching models: symmetric and reverse.
For maximal sym-palindromes, we proposed suffix-tree-based algorithms that work in their construction time and $O(n)$ space.
Also, for maximal rev-palindromes, we showed that
Manacher's algorithm can be generalized to the matching models we consider and presented $O(n)$ time and space algorithms.

Our future work includes the following:
\begin{enumerate}
  \item Can we efficiently compute maximal palindromes under other string matching models?
    For example, can we extend our algorithm to a structural matching model~\cite{Shibuya04}, which is a generalization of a parameterized matching?
  \item It is known that any string of length $n$ can contain at most $n+1$ distinct palindromes~\cite{DroubayJP01}, and all distinct palindromes can be computed in $O(n)$ time~\cite{GroultPR10}.
    It is interesting to study distinct palindromes under SCSTTRs.
    Rubinchik and Shur~\cite{RubinchikS18} claimed that their \emph{eertree} data structure for computing distinct palindromes can be extended to computing distinct WK-palindromes.
    We are unaware of whether the computed palindromes are symmetric-based or reversal-based.
  \item While there can be $\Omega(n)$ suffix palindromes in a string of length $n$, their lengths can be compactly represented with $O(\log n)$ arithmetic progressions~\cite{ApostolicoBG95,matsubara_tcs2009}. Can suffix sym-/rev-palindromes under SCSTTRs be represented compactly in a similar way?
\end{enumerate}

\section*{Acknowledgments}
This work was supported by JSPS KAKENHI Grant Numbers JP23H04381 (TM),
JP20J21147 (MF), JP21K17705 (YN), JP23H04386 (YN),
JP20H05964 (SI), JP23K24808 (SI), JP24K02899 (HB), and JP18H04098 (MT).

\end{document}